\documentclass[11pt]{article}

\usepackage[letterpaper,margin=1in]{geometry}
\usepackage[T1]{fontenc}
\usepackage{newpxtext}
\usepackage{mathtools,amssymb,amsthm}
\usepackage[vvarbb]{newtxmath}
\usepackage{microtype}
\usepackage{xcolor}
\usepackage{url}
\usepackage[bookmarks=true,hypertexnames=false]{hyperref}
\usepackage{thmtools}
\usepackage[nameinlink,capitalise,noabbrev]{cleveref}
\usepackage{braket}
\newtheorem{theorem}{Theorem}[section]
\newtheorem{lemma}[theorem]{Lemma}

\theoremstyle{definition}

\theoremstyle{remark}

\newcommand{\bits}{\{0,1\}}
\newcommand{\cube}{\{0,1\}}
\newcommand{\R}{\mathbb{R}}
\newcommand{\A}{\mathsf{A}}
\newcommand{\B}{\mathsf{B}}
\newcommand{\KA}{K_{\A}}
\newcommand{\KB}{K_{\B}}
\newcommand{\KE}{K_{\mathsf E}}
\newcommand{\cK}{\mathcal{K}}
\newcommand{\cL}{\mathcal{L}}
\newcommand{\cR}{\mathcal{R}}
\DeclareMathOperator{\supp}{supp}
\DeclareMathOperator{\ml}{ml}

\title{Impossibility of Perfectly Complete \\ Many-Round Key Agreement in the QROM}

\author{
Longcheng Li\thanks{University of Cambridge. Email: \texttt{lilongcheng116@gmail.com}}
\quad
Qian Li\thanks{Shenzhen Research Institute of Big Data. Email: \texttt{liqian.ict@gmail.com}}
\quad
Xingjian Li\thanks{Tsinghua University. Email: \texttt{lxj22@mails.tsinghua.edu.cn}}
\quad
Qipeng Liu\thanks{University of California San Diego. Email: \texttt{qipengliu0@gmail.com}}
}

\date{}

\begin{document}

\maketitle

\begin{abstract}
This paper proves that it is impossible to construct perfectly complete quantum key agreement protocols (QKA) from quantumly secure one-way functions (OWFs) in a black-box manner. 

Specifically, consider any protocol in which Alice and Bob exchange only classical messages, make at most $q_{\A}$ and $q_{\B}$ quantum queries, respectively, to a Boolean-valued random oracle, and agree on a shared key with certainty. This paper shows that there exists an eavesdropper, given the classical messages, that can recover the shared key with certainty using $O((q_{\A}+q_{\B})^5)$ classical oracle queries. The bound is independent of the number of rounds, transcript length, key length, and oracle-domain size. Previous results only applies to two-round key agreement (Li et al. CRYPTO 26) or relies on unproven conjectures (Austrin et al. CRYPTO 22). 

GPT-5.6 Sol Ultra found this proof in a one-shot conversation and drafted a preliminary version of this paper. The authors are fully responsible for the correctness, writing and discussions of this paper.
\end{abstract}

\section{Introduction}

Key agreement is a fundamental cryptographic task in which two parties use public interaction to establish a shared secret that was not available to either party in advance. In the random oracle model (ROM), Impagliazzo--Rudich established the foundational black-box separation between one-way functions (OWFs) and key agreement, and Barak--Mahmoody later gave an \(O(q^2)\)-query eavesdropping attack against every \(q\)-query classical key-agreement protocol~\cite{ImpagliazzoRudich89,BarakMahmoody09}.
The quantum random oracle
model (QROM) is more subtle because an honest party may query a superposition
of oracle inputs~\cite{BonehEtAl11}.  A particularly interesting case is
\emph{quantum computation with classical communication} (QCCC): the local
computations and oracle queries are quantum, while every communicated message
is classical.

Austrin, Chung, Chung, Fu, Lin, and Mahmoody made the first significant progress on QCCC
key agreement in the QROM\@.  They gave an unconditional quadratic-query
classical attack when one party is classical and an unconditional
exponential-query classical attack when both parties are quantum.  Under their Polynomial Compatibility Conjecture,
they obtained a polynomial-query classical attack against general perfectly complete QCCC protocols~\cite{AustrinEtAl22}.

A subsequent sequence of works by Li, Li, Li, and Liu studied the two-message
key-agreement protocols induced by quantum public-key encryption.  Their first
work uses quantum Markov chains and handles classical-query key
generation~\cite{LiEtAl24}.  Their second work formulates a Boolean-function
conjecture and unconditionally handles logarithmically many quantum queries
in key generation~\cite{LiEtAl25}.  Most recently, the conjecture and the
restriction on key-generation queries were both removed for perfectly
complete quantum public-key encryption with classical keys and classical or
quantum ciphertexts, using a quantum-query
eavesdropper~\cite{LiEtAl26}.  These developments leave the natural question
of whether arbitrary-round, perfectly complete QCCC key agreement can be
secure in the QROM. 

\subsection{Our Main Result}

Our main theorem gives a round-independent negative answer, improving the two-round result in~\cite{LiEtAl26} and removing the conjecture in~\cite{AustrinEtAl22}.  More interestingly,
although the honest parties may query the oracle in superposition, the
eavesdropper only makes classical queries.

\begin{theorem}[Informal]\label{thm:informal}
Fix a Boolean-output random oracle.  Every finite-round, perfect complete
QCCC key-agreement protocol admits a query-efficient eavesdropper that recovers
the shared key with probability one.  If the honest parties make \(q_{\A}\)
and \(q_{\B}\) quantum oracle queries, the eavesdropper makes
\(O((q_{\A}+q_{\B})^5)\) classical oracle queries.
\end{theorem}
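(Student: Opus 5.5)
The plan is a Barak--Mahmoody-style ``learn the heavy queries'' eavesdropper, with two changes: classical query sets are replaced by the polynomial method, and the round-by-round argument is replaced by a round-independent contradiction that uses perfect completeness.

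First I fix the polynomial representation. For a fixed transcript $\tau=(m_1,\dots,m_r)$ and fixed local randomness, the probability that Alice, interacting with oracle $H\in\bits^N$, is consistent with $\tau$ is a polynomial $p_{\A,\tau}(H)$ of degree at most $2q_{\A}$ in the oracle bits. Bob's probability $p_{\B,\tau}(H)$ likewise has degree at most $2q_{\B}$. This holds across many rounds: consistency with the whole transcript is the squared norm of a sequence of projections interleaved with at most $q_{\A}$ queries. Since messages are classical, the joint probability of $(\tau,\text{keys})$ factorizes as
\[
\Pr[H]\cdot p_{\A,\tau,k}(H)\cdot p_{\B,\tau,k'}(H).
\]
Perfect completeness then says: whenever $p_{\A,\tau,k}(H)\,p_{\B,\tau,k'}(H)>0$ for some common $H$, we have $k=k'$. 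Rewritten via supports of nonnegative polynomials over the cube, this is a statement about the supports of $p_{\A,\tau,k}$ and $p_{\B,\tau,k'}$ as functions on $\bits^N$.

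Next, the eavesdropper. Eve maintains a partial assignment $L$ of classical queries. Conditioned on $\tau$ and $L$, she repeatedly looks for an oracle position $x$ whose ``influence'' on the conditional distribution of $H$ given $(\tau,L)$ is at least $\varepsilon$, queries it, and adds it to $L$. Here influence means the probability that flipping $x$ changes the value of some certificate of Alice's or Bob's view. Once no heavy position remains, she outputs a key $k$ such that some $(H',\text{Alice view})$ consistent with $\tau$ and $L$ yields $k$. Because we want success with certainty, Eve uses exact support reasoning rather than sampling. The output is any $k$ such that there exist $H$ agreeing with $L$, and views of both parties, with $p_{\A,\tau,k}(H)p_{\B,\tau,k'}(H)>0$. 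To make this well defined I would use the degree/certificate bound: a nonzero degree-$d$ nonnegative polynomial has a certificate (a partial assignment forcing it positive) of size at most polynomial in $d$, by sensitivity/block-sensitivity bounds, $C\le \deg^{3}$ or similar. The key claim is that once $L$ contains every position that is ``forced'' for the real pair of views, any Alice-compatible fake world and the real Bob world can be glued. Take $H^*$ to agree with a certificate $C_A$ of the fake Alice support and a certificate $C_B$ of the real Bob support. If $C_A$ and $C_B$ are consistent outside $L$, then perfect completeness forces the fake key to equal Bob's key.

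The main obstacle is bounding the number of queries independently of rounds and domain. The heavy-query loop must terminate after $\mathrm{poly}(q)$ steps, and there must be no surviving conflicts between fake Alice certificates and real Bob certificates. I would handle this with a potential/blocking argument in the style of Nisan's $D\le C\cdot bs$ proof. Each time a conflict arises, Eve queries all of $C_A$, which has size $O(q^{3})$ by the certificate bound applied to degree $2q_{\A}$. Each such conflict must fix a variable in every certificate of Bob's true support, and a minimal Bob certificate has size at most $O(q_{\B}^{2})$ after using the block-sensitivity bound $bs\le O(\deg^2)$. So there are at most $O(q^{2})$ conflict rounds, each costing $O(q^{3})$ queries, for $O((q_{\A}+q_{\B})^5)$ total. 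This conflict-resolution iteration is, I expect, where all the subtlety lies. Eve does not know Bob's true view, so the argument must be that every possible real Bob support consistent with $\tau$ loses a certificate-block per round. I would try to prove this via the fact that certificates of $\mathrm{supp}(p_{\A,\tau,k})$ and $\mathrm{supp}(p_{\B,\tau,k'})$ for $k\neq k'$ must pairwise intersect with contradicting values.
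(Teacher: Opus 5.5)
\textbf{There is a genuine gap: the key tool you rely on is false.} Your factorization and your use of perfect completeness are correct, and they match the paper's rectangularity and disjoint-support lemmas. Here perfect completeness means that for $k\ne k'$ we have $\supp(p_{\A,\tau,k})\cap\supp(p_{\B,\tau,k'})=\emptyset$ on the cube.

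The failure is in the step that converts low degree into few queries. You assert that a nonzero nonnegative degree-$d$ polynomial has a certificate (a partial assignment forcing it positive) of size $\mathrm{poly}(d)$, ``by sensitivity/block-sensitivity bounds''. Those bounds, e.g.\ $bs(f)\le 2\deg(f)^2$, control the degree of the Boolean function $f$ itself. They say nothing about a real polynomial whose support is $f^{-1}(1)$, and the transfer is false.

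A counterexample: take $p(x)=(1-2|x|/N)^2$ with $N$ even.
\begin{itemize}
\item Its multilinearization has degree $2$.
\item It is exactly the acceptance probability of the one-query algorithm that queries $\frac{1}{\sqrt N}\sum_i\lvert i\rangle\lvert -\rangle$ and projects the index register back onto the uniform superposition. So it can be the probability of one of Alice's messages.
\item Its support is $\{x:|x|\ne N/2\}$. A subcube with $a$ coordinates fixed to $1$ and $b$ fixed to $0$ avoids weight $N/2$ only if $a>N/2$ or $b>N/2$. Hence every certificate, at every point of the support, has size at least $N/2+1$.
\item The indicator $[p>0]$ has sensitivity $\Omega(N)$.
\end{itemize}
Consequently both your $O(q^3)$ cost per conflict and your $O(q_{\B}^2)$ bound on the number of conflict rounds become bounds depending on $N$. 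Separately, the claim that each conflict fixes a variable in every certificate of Bob's unknown true support is left unproved. The influence-based heavy-query loop also plays no role in the certainty argument.

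\textbf{What is missing is a degree-based replacement for certificates.} The paper never uses certificates. Its structure is:
\begin{itemize}
\item It merges the two sides into $C_{t,k}=A_{t,k}B_{t,k}$, whose supports are pairwise disjoint across keys.
\item It bounds the number of live keys by $2^d$, using $|\supp(C_{t,k})|\ge 2^{N-d}$.
\item It binary-searches over these keys, using a separator for two polynomials $p,q$ of degree at most $d$ with $pq\equiv 0$ on the cube.
\end{itemize}
In the separator, the sets to be hit are the maximum monomials of $q$; querying a hitting set for them strictly lowers $\deg q$ on every branch. The size $b$ of a maximal pairwise-disjoint family of such monomials is bounded by $O(D^2)$. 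This comes from applying the discrete Markov inequality to $s=p^2-q^2$, restricted to block flips around a point $z^\star$ that maximizes $|s|$. Disjointness of supports forces a sign change inside each monomial block. The choice of a maximizer is what keeps the normalized restriction bounded by $1$; at an arbitrary point, as in your block-sensitivity reasoning, no such bound holds. This gives $O(d^3)$ queries per degree drop, $O(d^4)$ per separator, and $O(d^5)$ overall.
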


The query bound is independent of the number of rounds, the transcript
length, the key length, and the oracle-domain size. Relative to the
polynomial-query attack of~\cite{AustrinEtAl22}, our attack is unconditional and
recovers the key with certainty. Relative to the later QPKE
results~\cite{LiEtAl26}, our attack applies to arbitrarily many classical communication rounds and uses only classical queries. 
Moreover, our new result allows parties to keep arbitrary quantum states between stages; \cite{LiEtAl26} handles only the two-round case where parties keep classical states between stages (corresponding to classical secret keys in a quantum public-key encryption scheme).

The result is information-theoretic in query complexity.  The eavesdropper is
allowed unbounded computation, but only bounded number of quantum/classical queries.
We assume that two parties Alice and Bob begin with independent private states and
exchange only classical messages; pre-shared entanglement, correlated setup,
and quantum communication are not allowed in our model.  Perfect completeness is
essential to our argument.  Extending the proof to imperfect completeness
would require significantly different ideas.

\paragraph{Proof overview.}
Fix the complete public transcript \(t\), a pair of output keys \(a,b\), and
the truth table \(z\in\{0,1\}^N\) encoding the oracle.  Because communication
is classical and the initial private state is a product, the probability of
this branch factors as
\[
  \Pr[T=t,\KA=a,\KB=b\mid z]
  = A_{t,a}(z)\cdot B_{t,b}(z).
\]
The quantum polynomial method~\cite{BBCMW01} bounds the degrees of the two
nonnegative factors by \(2q_{\A}\) and \(2q_{\B}\).

Perfect completeness holds pointwise on the finite oracle space.  Consequently,
all cross-key products \(A_{t,k}B_{t,\ell}\), for \(k\ne\ell\), vanish on the
Boolean cube.  
Consider the following polynomials on $\{0,1\}^N$:
\[
  C_{t,k}=A_{t,k}\cdot B_{t,k}. 
%  \qquad\text{where}\qquad   d = 2(q_{\A}+q_{\B}).
\]
where have degree at most \(d:=2(q_\A+q_\B)\). For two different keys \(k\ne\ell\), perfect completeness gives
\[
\begin{aligned}
  C_{t,k}(z)C_{t,\ell}(z)
  &=A_{t,k}(z)B_{t,k}(z)
    A_{t,\ell}(z)B_{t,\ell}(z)\\
  &=\bigl(A_{t,k}(z)B_{t,\ell}(z)\bigr)
    \bigl(A_{t,\ell}(z)B_{t,k}(z)\bigr)
  =0.
\end{aligned}
\]
Thus, fixing transcript $t$, the supports of $\{C_{t, k}\}_{k}$ are pairwise disjoint.
Moreover, summing over the output keys gives
\[
  \Pr[T=t\mid z]=\sum_k C_{t,k}(z).
\]
Whenever \(t\) can occur under \(z\), the left-hand side is positive, so
exactly one term on the right is nonzero for each $z$.  Conditioned on \((t,z)\), both
parties therefore output the uniquely determined key indexing that term.

Finally, by the standard fact that a nonzero multilinear polynomial of degree at most \(d\) on \(\{0,1\}^N\) is nonzero on at least \(2^{N-d}\) points, we have that, %One way to see this is to choose a highest-degree monomial: after any fixing of the variables outside that monomial, the restricted polynomial remains nonzero and hence is nonzero at some point of that subcube. Thus, 
for the set
\(\cK_t=\{k:C_{t,k}\not\equiv0\}\) of keys that can occur with \(t\) under
some oracle,
\[
  2^N
  \ge \sum_{k\in\cK_t}|\supp(C_{t,k})|
  \ge |\cK_t|\,2^{N-d},
\]
since $\supp(C_{t,k})$ are disjoint. 
It follows that \(|\cK_t|\le2^d\).

It remains to identify the compatible key without evaluating the
\(C_{t,k}\)'s one by one.  We prove the following useful lemma.  Suppose that
two polynomials \(p\) and \(q\), each of degree at most \(d\), have disjoint supports, and we are promised that one of them is nonzero at
the actual oracle (i.e., the input $z$).  Then a deterministic procedure can determine which one by
reading only \(O(d^4)\) oracle bits.

The procedure proceeds in rounds. In each round, it identifies \(O(d^3)\)
oracle positions with the following property: once the values at these
positions are known, either the answer is determined immediately or
the degree of one of the two relevant polynomials decreases. Since the degree
can decrease at most \(O(d)\) times, the procedure determines which of the two
polynomials is nonzero at \(z\) using a total of \(O(d^4)\) classical oracle
queries.

Finally, the eavesdropper performs a balanced binary search over the at most
\(2^d\) candidate keys. At each level of the search, the candidates are
partitioned into two sets. Let \(p\) be the sum of the polynomials associated
with the first set, and let \(q\) be the corresponding sum for the second set.
Because exactly one candidate polynomial evaluates to a nonzero value at
\(z\), exactly one of \(p(z)\) and \(q(z)\) is nonzero. The procedure above
therefore determines which half contains the correct key using \(O(d^4)\)
classical oracle queries. Since the binary search has at most \(d\) levels,
the eavesdropper uses \(O(d^5)\) classical oracle queries in total.
The detailed proof combines ideas from~\cite{NS94}, \cite{midrijanis2004exact}, and~\cite[Theorem~4]{KothariEtAl26}. %and Li et al.~\cite[Lemmas~2.12 and~3.4]{LiEtAl26}.

\paragraph{Acknowledgment.} 
GPT-5.6 Sol Ultra discovered the proof in a one-shot conversation and produced a preliminary draft of the paper. The authors independently verified every statement and proof, simplified and refined the argument, and wrote the final manuscript. The authors take full responsibility for the correctness, exposition, and discussion presented in the paper.

\section{Preliminaries}\label{sec:preliminaries}

\subsection{Notations in Boolean Functions Analysis}
For a positive integer $N$, let $[N]=\{1,\ldots,N\}$.  Any function $f:\cube^N\rightarrow\mathbb{R}$ has a unique expression as a
multilinear polynomial
\[
f(x)=\sum_{S\subseteq[N]}a_S\cdot x_S,
\]
where $x_S:=\Pi_{i\in S} x_i$, and $a_S$ is the coefficient of $x_S$. 
The \emph{degree} of $f$, denoted $\deg(f)$, is defined as $\max\{|S|: a_S\neq 0\}$. A monomial $x_S$ is called \emph{maximum} if $a_S\neq 0$ and it has degree $\deg(f)$, i.e., $|S|=\deg(f)$. Two monomials $x_S$ and $x_T$ are called \emph{disjoint} if $S\cap T=\emptyset$. %We say that $f$ is \emph{not identically zero} if $f(x)\not\equiv0$.
We let $\supp(f)=\{x\in\cube^N:f(x)\neq0\}$ denote the support of $f$. For any polynomial $g$ with domain $\cube^N$, we have that $x_i^k=x_i$ for any $k\geq 1$. The multilinearization $\ml(g)$ of polynomial $g$ is obtained by replacing every $x_i^k$ with $x_i$ for $k\geq 1$.

A restriction $\rho$ fixes some variables to $\cube$.  We write
$f|_{\rho}$ for the function on the remaining variables obtained after the restriction.  For $x\in\cube^N$ and $S\subseteq[N]$, let $x^S$ be the point obtained by flipping the coordinates in $S$: $x^S_i=1-x_i$ for $i\in S$ and $x^S_i=x_i$ otherwise. For $x\in\cube^{N}$, we use $|x|:=|\{i:x_i=1\}|$ to denote the Hamming weight.

We will use the following standard facts in Boolean function analysis.

\begin{lemma}[\cite{NS94}]\label{lem:support}
Let $f:\cube^N\to\R$ be a non-zero function of degree at most $d$, then
$|\supp(f)|\ge 2^{N-d}$.
\end{lemma}

\begin{lemma}[\cite{BBCMW01}]\label{prop:polynomial-method}
    Suppose a quantum algorithm makes $d$ queries to a Boolean string\footnote{We interpret the string $x \in \{0,1\}^N$ as a function $x:[N] \to\{0,1\}$, and model queries to $x$ as oracle queries to this function.} $x\in\{0,1\}^N$,  and the acceptance probability is denoted by $f(x)$. Then the function $f:\{0,1\}^N\rightarrow \mathbb{R}$ has degree at most $2d$. The same holds for the squared norm of any fixed postselected branch. 
\end{lemma}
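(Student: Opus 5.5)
The plan is the standard amplitude-tracking induction. Model the algorithm as an alternating sequence $U_d O_x U_{d-1} O_x \cdots O_x U_0\ket{0}$, where each $U_j$ is a fixed unitary (independent of $x$) and $O_x$ is the standard oracle $O_x\ket{i,b,w}=\ket{i,b\oplus x_i,w}$ acting on a query register $i\in[N]$, an answer bit $b$, and a workspace $w$. I will prove by induction on $t$ that after $t$ queries every amplitude $\alpha_{i,b,w}(x)$ in the computational basis is a complex-coefficient multilinear polynomial in $x$ of degree at most $t$. Equivalently, its real and imaginary parts are real multilinear polynomials of degree at most $t$.

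\emph{Base case and fixed unitaries.} Before any query the amplitudes are constants, so they have degree $0$. Applying a fixed $U_j$ replaces each amplitude by a fixed complex linear combination of the old amplitudes, which does not increase the degree.

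\emph{Query step.} After $O_x$, the new amplitude of $\ket{i,b,w}$ is
\[
  (1-x_i)\,\alpha_{i,b,w}(x)+x_i\,\alpha_{i,1-b,w}(x).
\]
Multiplying by $x_i$ or $1-x_i$ raises the degree by at most one. Taking $\ml(\cdot)$, which is legitimate because $x_i^2=x_i$ on $\cube^N$, never increases the degree. So after $d$ queries every amplitude has degree at most $d$. If the paper's oracle model is the phase oracle $\ket{i}\mapsto(-1)^{x_i}\ket{i}$, the same argument applies via $(-1)^{x_i}=1-2x_i$.

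\emph{Acceptance probability.} For a fixed projector $\Pi$ onto the accepting basis states,
\[
  f(x)=\|\Pi\phi_x\|^2=\sum_{v\text{ accepting}}\Bigl(\mathrm{Re}\,\alpha_v(x)^2+\mathrm{Im}\,\alpha_v(x)^2\Bigr).
\]
Each summand is a product of two real polynomials of degree at most $d$, so after multilinearization $f$ agrees on $\cube^N$ with a polynomial of degree at most $2d$. By uniqueness of the multilinear representation, $\deg(f)\le 2d$.

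\emph{Postselected branches.} I treat a fixed postselected branch, that is, a fixed sequence of intermediate measurement outcomes, by replacing the corresponding steps with fixed projectors (or Kraus operators). These are linear maps independent of $x$, so the induction goes through verbatim for the unnormalized branch vector. Its squared norm, which is the probability of that branch, then has degree at most $2d$ by the same computation. The same reasoning covers classical messages that are sent mid-protocol, since these are just measured and postselected registers.

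I expect no step to be a real obstacle. The only care needed is bookkeeping: making the oracle model precise, and checking that non-unitary fixed operations such as projections and discarding registers preserve the linear-combination structure. They do, because they are fixed linear maps on the amplitude vector, and tracing out a register only sums squared norms of fixed branches.
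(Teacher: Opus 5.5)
Your proof is correct and is the standard amplitude-tracking argument of \cite{BBCMW01}, which the paper cites rather than reproves: amplitudes have degree at most $t$ after $t$ queries, so the squared real and imaginary parts of the accepting amplitudes give degree at most $2d$ after multilinearization and uniqueness. Your treatment of postselected branches via fixed, $x$-independent projectors or Kraus operators also matches how the paper uses the lemma in \Cref{lem:rectangularity}, where message-conditioned isometries and outgoing-message projections are composed into a linear branch operator $M^{\A}_{t,a}(z)$.
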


A deterministic decision tree adaptively queries coordinates of an unknown
$x\in\cube^N$.  Its depth is the maximum number of queried coordinates on
any root-to-leaf path.  

\subsection{The QROM and QCCC Key Agreement}

For notational simplicity, we use the security parameter \(\lambda\) also as
the oracle input length and write \(N=2^\lambda\).  A uniformly random oracle
is a uniformly random function
\[
  H:\bits^\lambda\longrightarrow\bits.
\]
We identify \(H\) with its truth table
\[
  \mathrm{tt}(H)=\bigl(H(x)\bigr)_{x\in\bits^\lambda}\in\cube^N.
\]
For \(z\in\cube^N\), let \(H_z\) denote the unique oracle having \(\mathrm{tt}(H_z)=z\).
We use the standard oracle query convention
\[
  O_{H_z}\lvert x,u\rangle
  =\lvert x,u\oplus H_z(x)\rangle,
  \qquad u\in\bits.
\]
In particular, a classical algorithm can recover
\(H_z(x)\) with one query by setting $u=0$.

A QCCC key-agreement protocol consists of two parties, Alice and Bob, that
begin with independent private states.  They may use private randomness,
arbitrary quantum computation, and arbitrary measurements, and they may query
\(O_{H_z}\) in superposition.  Every communicated message is classical.  Let
\(T\) be the complete public transcript, including any public randomness, and
let \(\KA,\KB\) be the classical output keys.  For each fixed \(\lambda\), we
assume that the protocol has finite transcript and key alphabets.  The number
of rounds has a finite worst-case bound for each \(\lambda\), but that bound
may depend arbitrarily on \(\lambda\).  All initial states and non-oracle
operations are independent of \(H\).  Alice and Bob make at most \(q_{\A}\)
and \(q_{\B}\) oracle queries. 
Without loss of generality, any external public coin is sampled by Alice and
announced as the first classical message.

All unconditioned probabilities are over the uniform oracle and all protocol
randomness and measurement outcomes.  Conditioned expressions such as
\(\Pr[\mathcal{E}\mid z]\) mean
\(\Pr[\mathcal{E}\mid \mathrm{tt}(H)=z]\).  The protocol is \emph{perfect complete} if
\[
  \Pr[\KA=\KB]=1.
\]

A passive eavesdropper observes \(T\) without altering the execution and may
then adaptively query the same oracle.  Our eavesdropper is required to be
query-efficient, but not computationally efficient or uniform.  It
may depend on the public protocol and \(\lambda\), but not on the sampled
oracle or private execution.  We count only oracle queries: there is no bound on running time or space, and all hidden constants in our query bounds are universal. 

\subsection{A Discrete Markov Inequality}

We will also use the following lemma due to the discrete Markov inequality from \cite{NS94}.
\begin{lemma}[\cite{KothariEtAl26}]\label{lem:discrete-markov}
Let $r:\{0,1\}^b\rightarrow\R$ be a polynomial with the following properties:
\begin{itemize}
\item $|r(x)|\leq 1$ for any $x\in\{0,1\}^b$,
\item $|r(0^b)|=1$,
\item $r(x)\cdot r(0^b)\leq 0$ for any $x\in\{0,1\}^b$ with $|x|=1$.
\end{itemize}
Then $\deg(r)\geq \sqrt{b/2}$.
\end{lemma}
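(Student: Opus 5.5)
We will symmetrize $r$ into a univariate polynomial and then apply the discrete Markov inequality of \cite{NS94} (the Ehlich--Zeller / Rivlin--Cheney bound). Replacing $r$ by $-r$ if necessary, we may assume $r(0^b)=1$. The hypotheses then read $|r(x)|\le 1$ on the cube and $r(x)\le 0$ for every $x$ with $|x|=1$. Let $d=\deg(r)$.

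\emph{Step 1 (symmetrization).} Define $p(k)$ to be the average of $r$ over the $\binom{b}{k}$ points of Hamming weight $k$, for $k=0,\dots,b$. By the Minsky--Papert symmetrization argument, averaging $r$ over all coordinate permutations gives a symmetric multilinear polynomial of degree at most $d$. Such a polynomial is a polynomial of degree at most $d$ in $|x|$, because each elementary symmetric polynomial $\sum_{|S|=j}x_S$ equals $\binom{|x|}{j}$ on the cube. Hence $p$ extends to a real univariate polynomial of degree at most $d$. Since averages preserve bounds and signs, we get three facts:
\[
|p(k)|\le 1 \text{ for } k=0,\dots,b,\qquad p(0)=r(0^b)=1,\qquad p(1)\le 0.
\]
Here $p(0)=r(0^b)$ because $0^b$ is the only point of weight $0$.

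\emph{Step 2 (large derivative).} From Step 1, $p(0)-p(1)\ge 1$. By the mean value theorem there is some $\xi\in[0,1]\subseteq[0,b]$ with $|p'(\xi)|\ge 1$.

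\emph{Step 3 (discrete Markov).} We invoke the discrete Markov inequality in the form used in \cite{NS94}. Let $p$ be a real polynomial of degree $d$ with $|p(k)|\le c_1$ at every integer $k\in\{0,\dots,b\}$, and suppose $|p'(\xi)|\ge c_2$ for some real $\xi\in[0,b]$. Then
\[
d\ge\sqrt{\frac{c_2\, b}{c_1+c_2}}.
\]
Taking $c_1=c_2=1$ gives $d\ge\deg(p)\ge\sqrt{b/2}$, which is the claim.

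The main technical content lies in Step 3. This is a classical theorem, and we would cite it rather than reprove it. If a self-contained argument were wanted, we would follow Ehlich--Zeller. One first shows that a polynomial bounded by $c_1$ on the integer points of $[0,b]$ is bounded by roughly $c_1/(1-d^2/b)$ on the whole interval whenever $d^2<b$, using the continuous Markov inequality on the gaps between integers. Combining this with the continuous Markov bound $|p'|\le 2d^2\max|p|/b$ yields the displayed inequality. Everything else is routine; the only point needing care is that symmetrization does not increase degree, which is the standard argument above.
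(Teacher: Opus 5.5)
Your symmetrization and mean-value steps are fine, and they follow the intended route: the paper gives no proof of its own and simply cites Kothari et al., who derive the lemma from the Nisan--Szegedy discrete Markov inequality. The gap is in Step 3, because the inequality you quote is not that inequality and is false as stated. The Nisan--Szegedy (Ehlich--Zeller, Rivlin--Cheney) bound reads as follows. If $b_1\le p(k)\le b_2$ for all integers $0\le k\le b$ and $|p'(\xi)|\ge c$ for some $\xi\in[0,b]$, then $\deg(p)\ge\sqrt{cb/(c+b_2-b_1)}$. The denominator involves the \emph{width} $b_2-b_1$ of the range. Under $|p(k)|\le c_1$ this width is $2c_1$, not $c_1$.

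A concrete counterexample to your version is $b=4$, $p(x)=x-2$. It has $|p(k)|\le 2=c_1$ on $\{0,\dots,4\}$ and $p'\equiv 1=c_2$, yet $\deg p=1<\sqrt{4/3}$. Your own Ehlich--Zeller sketch gives the correct form, not yours. From $\max_{[0,b]}|p|\le c_1/(1-d^2/b)$ and $c_2\le 2d^2\max_{[0,b]}|p|/b$ you get $c_2(b-d^2)\le 2c_1d^2$, that is, $d^2\ge c_2b/(2c_1+c_2)$.

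With $c_1=c_2=1$, your argument therefore proves only $\deg(r)\ge\sqrt{b/3}$, not the stated $\sqrt{b/2}$. Rescaling cannot repair this, because the bound is invariant under affine changes of the values. For example, applying it to $(1+p)/2$, with range $[0,1]$ and a drop of $1/2$ between $0$ and $1$, again gives $\sqrt{b/3}$.

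For the paper's only use of this lemma the weaker constant is harmless: the separator lemma needs just $b=O(D^2)$. To get the stated $\sqrt{b/2}$ you need something sharper than the plain discrete Markov inequality. One option is a better bound on $\max_{[0,b]}|p|$ than $1/(1-d^2/b)$. Another is to use the full drop $p(0)-p(1)\ge 1$ over the unit interval at the endpoint, rather than collapsing it to a single derivative value $|p'(\xi)|\ge 1$. Otherwise, state and prove the lemma with $\sqrt{b/3}$.
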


\section{Proof of the Main Theorem}\label{sec:main}

We now prove the quantitative, pointwise version of
\Cref{thm:informal}.

\begin{theorem}[Main theorem]\label{thm:main}
Let \(\Pi\) be a perfectly complete QCCC key-agreement protocol in the QROM\@.
Suppose Alice and Bob begin with independent private states and make at most
\(q_{\A}\) and \(q_{\B}\) quantum queries, respectively.  Set
\[
  d=\min\{2(q_{\A}+q_{\B}),N\}.
\]
There is a passive eavesdropper against \(\Pi\)
with the following property: for every \(z\in\cube^N\) and every transcript
\(t\) satisfying \(\Pr[T=t\mid z]>0\), after observing \(t\), the eavesdropper
makes \(O(d^5)\) classical queries to \(H_z\) and outputs the unique key \(k\)
for which
\[
  \Pr[\KA=\KB=k\mid T=t,z]=1.
\]
Consequently,
\[
  \Pr[\KA=\KB=\KE]=1,
\]
and the eavesdropper uses
\(O((q_{\A}+q_{\B})^5)\) classical oracle queries.
\end{theorem}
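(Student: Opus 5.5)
We follow the outline from the introduction and split the argument into three steps: a structural step, an algorithmic lemma, and a binary search.

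\emph{Structural step.} Fix a transcript $t$. Because communication is classical and the initial states form a product, we can write $\Pr[T=t,\KA=a,\KB=b\mid z]=A_{t,a}(z)B_{t,b}(z)$. Here $A_{t,a}$ is the squared norm of Alice's postselected branch, obtained by projecting onto her messages in $t$ and her output $a$, with Bob's messages treated as fixed classical inputs. The factor $B_{t,b}$ is defined symmetrically. By \Cref{prop:polynomial-method}, $\deg A_{t,a}\le 2q_{\A}$ and $\deg B_{t,b}\le 2q_{\B}$. Define $C_{t,k}=\ml(A_{t,k}B_{t,k})$, which has degree at most $d$. Perfect completeness holds pointwise in $z$, because every $z$ has positive probability under the uniform oracle. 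Hence the cross terms $A_{t,k}B_{t,\ell}$ with $k\ne\ell$ vanish on the cube. It follows that the functions $C_{t,k}$ have pairwise disjoint supports and nonnegative values, and that $\sum_k C_{t,k}=\Pr[T=t\mid z]$. So whenever $\Pr[T=t\mid z]>0$, exactly one index $k$ satisfies $C_{t,k}(z)>0$, and that $k$ is the key both parties output with certainty. By \Cref{lem:support} and disjointness, the set $\cK_t=\{k: C_{t,k}\not\equiv 0\}$ has size at most $2^d$. When $d=N$ the eavesdropper simply reads all of $z$, so we may assume $d<N$.

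\emph{Key lemma.} Let $p,q\ge 0$ on the cube, each of degree at most $d$, with disjoint supports. Suppose we are promised that $p(z)+q(z)>0$. Then a deterministic decision tree of depth $O(d^4)$ decides whether $p(z)>0$. We proceed in rounds, maintaining a restriction $\rho$ consistent with the queries made so far.
\begin{enumerate}
\item Take a maximal family of pairwise disjoint maximum monomials of $p|_\rho$, and likewise for $q|_\rho$.
\item Show that each such family has at most $O(d^2)$ members. Together with $|S|\le d$, this means querying all variables in the union costs $O(d^3)$ queries.
\item Use maximality: every maximum monomial of $p|_\rho$ meets the queried set. Once those variables are fixed, either $\deg p$ or $\deg q$ drops, or one of the restricted polynomials becomes identically zero, which settles the answer under the promise.
\end{enumerate}
The degrees sum to at most $2d$, so there are $O(d)$ rounds and $O(d^4)$ queries in total.

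\emph{Main obstacle: the bound in step 2.} This is where \Cref{lem:discrete-markov} enters. Suppose $p$ has $b$ disjoint maximum monomials $S_1,\dots,S_b$. We fix the remaining variables so that the top-degree structure survives, and we use the fact that $q$ has a point of its support nearby. This produces a univariate-like restriction $r$ on $\{0,1\}^b$, where each block $S_j$ is collapsed to the single bit ``all of $S_j$ set to 1''. Normalize $r$ so that $r(0^b)=1$. Disjointness of supports combined with nonnegativity should force the sign condition at weight-one points, namely $q$ is positive where $p$ vanishes, giving $r=p-q$ up to scaling. The bound $|r|\le 1$ requires choosing the base point at a maximizer of $|r|$. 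The lemma then gives $d\ge \deg r\ge\sqrt{b/2}$, so $b\le 2d^2$. The delicate parts are arranging the base point and the normalization, likely following \cite{midrijanis2004exact} and \cite[Theorem~4]{KothariEtAl26}. I expect this step to be the hardest to get exactly right.

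\emph{Binary search.} The eavesdropper splits the current candidate set $\cK_t$ into two halves, sets $p=\sum_{k\in\text{first}}C_{t,k}$ and $q=\sum_{k\in\text{second}}C_{t,k}$, and runs the lemma to find the half containing the key. The functions $p$ and $q$ are nonnegative, have degree at most $d$, and have disjoint supports, and the promise holds because $\Pr[T=t\mid z]>0$. After at most $d$ levels a single key remains. The total cost is $O(d^5)$ queries, and correctness holds with probability $1$.
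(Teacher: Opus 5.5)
Your overall architecture matches the paper: transcript rectangularity with degree bounds, pointwise vanishing of cross terms, disjoint supports of the $C_{t,k}$, the $2^d$ bound on $\cK_t$, binary search, and rounds that query the union of a maximal family of pairwise-disjoint maximum monomials. The gap is the one step that carries the technical weight, the bound $b=O(d^2)$. The construction you sketch does not produce the sign pattern that \Cref{lem:discrete-markov} requires, for two reasons.

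\emph{The collapse.} Collapsing each block $S_j$ to ``all of $S_j$ set to $1$'' gives no control. The surviving top coefficient only tells you that the restriction to the subcube on $S_j$ is not identically zero. That restriction can vanish at both the all-zeros and the all-ones setting, e.g.\ $x_1(1-x_2)$. What you need is a nonempty flip set $E_j\subseteq S_j$, depending on the base point, such that $(z^\star)^{E_j}$ lies in the relevant support. These sets are disjoint because the $S_j$ are. Then $r(y)$ is $s$ evaluated at $z^\star$ with $\bigcup_{j:y_j=1}E_j$ flipped.

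\emph{The orientation.} You normalize $r=p-q$ so that $r(0^b)=1$, i.e.\ $p(z^\star)>0$, and then use the maximum monomials of $p$, hoping some flip makes $p$ vanish. Nonvanishing of the restriction says nothing about reaching a zero; the restriction could be $1+x_1x_2$. The flip set $E_j$ exists only for the polynomial that \emph{vanishes} at the base point. That polynomial is zero at $z^\star$ but not identically zero on the subcube, so some flip lands in its support. That point is then outside the other polynomial's support, which gives the sign change.

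The paper's argument runs as follows. Choose $z^\star$ maximizing $|s|$ over the current subcube, with $s=p^2-q^2$. For your nonnegative $p,q$, $s=p-q$ suffices and gives $b\le 2D^2$. Exactly one of $p(z^\star),q(z^\star)$ is nonzero. Take a maximal disjoint family of maximum monomials of the \emph{other} polynomial, query its union, and lower only that polynomial's degree in this round. This is enough, since $\deg p+\deg q$ still drops each round.

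Your stronger claim, that \emph{both} families have $O(d^2)$ members, is true but needs an idea you did not supply. With $s=p-q$, the maximizers of $|s|$ may all lie inside $\supp(p)$, so no valid base point exists for bounding $p$'s family. The fix is to normalize separately, e.g.\ $s=p/\max p-q/\max q$. Its maximum modulus $1$ is then attained both at a point where $p=0$ and at a point where $q=0$.
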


\subsection{Low-Degree Transcript Rectangularity}

Fixing a complete classical transcript separates the two local quantum
executions.  This is the only step that uses the restriction to classical
communication.

\begin{lemma}[Transcript rectangularity]\label{lem:rectangularity}
For every complete transcript \(t\), pair of keys \(a,b\), and
\(z\in\cube^N\), there are functions
\(A_{t,a},B_{t,b}:\cube^N\to\R_{\ge0}\) such that
\begin{equation}
  \Pr[T=t,\KA=a,\KB=b\mid z]
  =A_{t,a}(z)B_{t,b}(z).
  \label{eq:rectangle}
\end{equation}
Moreover, their unique multilinear representations satisfy
\[
  \deg(A_{t,a})\le2q_{\A},
  \qquad
  \deg(B_{t,b})\le2q_{\B}.
\]
\end{lemma}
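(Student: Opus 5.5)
We plan to prove \Cref{lem:rectangularity} by writing the joint execution, for a fixed oracle \(z\), as an alternating sequence of local quantum operations on Alice's register \(\mathsf{A}\) and Bob's register \(\mathsf{B}\), interleaved with classical messages. Without loss of generality we purify everything. Each party begins from a fixed pure state, and private randomness is absorbed into ancillas. Every local step, including its oracle calls and the measurement producing the next message, is modeled as an isometry followed by a projective measurement in the computational basis of a message register. Fix the full transcript \(t=(m_1,\dots,m_r)\) and keys \(a,b\).

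On the branch \(t\), Alice's unnormalized post-measurement vector is obtained by applying to \(\ket{\psi_{\A}}\) the product
\[
  \Pi^{\A}_{a}\,V^{\A}_{r}\cdots \Pi^{\A}_{m_i} V^{\A}_{i}\cdots,
\]
where each \(V^{\A}_i\) is an isometry on \(\mathsf{A}\) that uses \(O_{H_z}\). The step \(V^{\A}_i\) may depend on the previously received messages, and these are fixed by \(t\). The projectors select Alice's own messages and her final key \(a\); on steps where Bob speaks, Alice simply stores the fixed message \(m_i\) classically.

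Call the resulting vector \(\ket{\alpha_{t,a}(z)}\), and define \(\ket{\beta_{t,b}(z)}\) analogously for Bob. Since the global state remains a product across the cut \(\mathsf{A}\otimes\mathsf{B}\), and each operation acts on only one side, the joint unnormalized vector on this branch is \(\ket{\alpha_{t,a}(z)}\otimes\ket{\beta_{t,b}(z)}\). Its squared norm is therefore
\[
  \|\alpha_{t,a}(z)\|^2\,\|\beta_{t,b}(z)\|^2 .
\]
We set \(A_{t,a}(z)=\|\alpha_{t,a}(z)\|^2\) and \(B_{t,b}(z)=\|\beta_{t,b}(z)\|^2\). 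Both are nonnegative, which gives \eqref{eq:rectangle}.

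For the degree bound, we note that \(\ket{\alpha_{t,a}(z)}\) is produced by a single fixed quantum algorithm, with every classical message hardwired, that makes at most \(q_{\A}\) queries to \(z\) and postselects on the fixed outcomes. By \Cref{prop:polynomial-method}, in its postselected-branch form, each amplitude is a polynomial of degree at most \(q_{\A}\) in \(z\). Hence \(A_{t,a}\) has multilinear degree at most \(2q_{\A}\), and likewise \(\deg(B_{t,b})\le 2q_{\B}\).

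The main subtlety is the query count. A transcript-dependent branch may, in a given branch, make fewer queries than the worst case, but the bound \(q_{\A}\) holds along every path, so this causes no problem. A second subtlety is that Alice's operations depend adaptively on Bob's messages. Fixing \(t\) removes this dependence, since all conditioning is then deterministic. Finally, we must justify the purification of mixed initial states and general measurements. A mixed initial state \(\sum_j p_j\ket{\phi_j}\bra{\phi_j}\) contributes \(\sum_j p_j\|\cdot\|^2\), a nonnegative combination that preserves both nonnegativity and the degree bound. General POVMs are handled via Naimark dilation into ancillas.
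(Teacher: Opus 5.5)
Your proposal is correct and takes essentially the same approach as the paper. Both arguments purify and dilate each party's local operations, hardwire the incoming messages fixed by \(t\), and postselect on the outgoing-message and final-key projections, so that the branch vector factors as \(\lvert\alpha_{t,a}(z)\rangle\otimes\lvert\beta_{t,b}(z)\rangle\) with squared norm \(\|\alpha_{t,a}(z)\|^2\|\beta_{t,b}(z)\|^2\); both then bound each factor's degree via the postselected-branch form of the polynomial method. Your extra remarks on mixed initial states, Naimark dilation, and path-dependent query counts match the paper's treatment and spell out slightly more of what it leaves implicit.
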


\begin{proof}

Purify the initial private states and private randomness, replace every local operation by a Stinespring isometry, and retain unrecorded Kraus indices in local environments.  
For fixed \((t,a)\), we compose Alice's
message-conditioned isometries and postselect on the outgoing-message projections, together
with her final-key projection. The operators composes  into a linear branch operator
\(M^{\A}_{t,a}(z)\).  Incoming messages act only as classical controls and are
hardwired to their values in \(t\).  Define \(M^{\B}_{t,b}(z)\) analogously.
Outgoing-message projections occur only at the sending party.

Let \(\lvert\psi_{\A}\rangle\otimes\lvert\psi_{\B}\rangle\) be the purified
initial product state and set
\[
  \lvert\alpha^z_{t,a}\rangle
  =M^{\A}_{t,a}(z)\lvert\psi_{\A}\rangle,
  \qquad
  \lvert\beta^z_{t,b}\rangle
  =M^{\B}_{t,b}(z)\lvert\psi_{\B}\rangle.
\]
Conditioning every classical channel on the value recorded in \(t\) leaves
the subnormalized global branch
\[
  \lvert\alpha^z_{t,a}\rangle\otimes
  \lvert\beta^z_{t,b}\rangle.
\]
Its squared norm is the branch probability and factors.  Thus
\cref{eq:rectangle} holds with
\[
  A_{t,a}(z)=\|\alpha^z_{t,a}\|^2,
  \qquad
  B_{t,b}(z)=\|\beta^z_{t,b}\|^2.
\]
These functions are nonnegative on the Boolean cube.  Applying
\Cref{prop:polynomial-method} to the two fixed local branches gives the stated
degree bounds.
\end{proof}

\subsection{Perfect Completeness and Disjoint Key Supports}

For every transcript \(t\) and key \(k\), define the function $C_{t,k}:\{0,1\}^N\rightarrow \R$ as
\begin{equation*}
  C_{t,k}=A_{t,k}\cdot B_{t,k}.
 % \label{eq:key-poly}
\end{equation*}
On the Boolean cube, \(C_{t,k}(z)=A_{t,k}(z)B_{t,k}(z)\ge0\), and
\(\deg(C_{t,k})\le d\).

\begin{lemma}[Disjoint key supports]\label{lem:disjoint-keys}
For every transcript \(t\), distinct keys \(k\ne\ell\), and
\(z\in\cube^N\),
\[
  C_{t,k}(z)C_{t,\ell}(z)=0.
\]
Moreover, whenever \(\Pr[T=t\mid z]>0\), exactly one \(C_{t,k}(z)\) is
nonzero, and both honest parties output that key with conditional probability
one.
\end{lemma}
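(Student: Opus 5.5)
The plan is to work pointwise on the finite oracle space, using \Cref{lem:rectangularity} together with perfect completeness. The first step is to turn the global condition $\Pr[\KA=\KB]=1$ into a statement about each fixed $z$. The oracle is uniform over the finite set $\cube^N$, so every $z$ has probability $2^{-N}>0$. Hence
\[
\sum_{z}2^{-N}\Pr[\KA\ne\KB\mid z]=0
\]
forces $\Pr[\KA\ne\KB\mid z]=0$ for every $z$. Since the transcript and key alphabets are finite, this gives $\Pr[T=t,\KA=a,\KB=b\mid z]=0$ for every $t$ and every $a\ne b$. By \cref{eq:rectangle}, this says $A_{t,a}(z)B_{t,b}(z)=0$ whenever $a\ne b$.

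For the first claim, fix distinct keys $k\ne\ell$ and regroup the four nonnegative factors:
\[
C_{t,k}(z)C_{t,\ell}(z)=\bigl(A_{t,k}(z)B_{t,\ell}(z)\bigr)\bigl(A_{t,\ell}(z)B_{t,k}(z)\bigr).
\]
Both bracketed products vanish by the previous step, so the whole product is $0$. This is just commutativity of real numbers evaluated at the point $z$, so there is no multilinearization subtlety.

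For the second claim, marginalize over the keys:
\[
\Pr[T=t\mid z]=\sum_{a,b}A_{t,a}(z)B_{t,b}(z)=\sum_k C_{t,k}(z),
\]
where the off-diagonal terms were already shown to vanish. If $\Pr[T=t\mid z]>0$, then at least one $C_{t,k}(z)$ is positive, because all terms are nonnegative. Pairwise disjointness then makes that index unique. The remaining terms give
\[
\Pr[\KA=\KB=k\mid T=t,z]=\frac{C_{t,k}(z)}{\Pr[T=t\mid z]}=1,
\]
so both parties output $k$ with conditional probability one.

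The only delicate step is the passage from the averaged completeness condition to the pointwise vanishing of the cross terms. It relies on the oracle distribution having full support on $\cube^N$ and on the alphabets being finite, so that zero-probability events can be read off as pointwise zeros. Everything after that is arithmetic with nonnegative reals.
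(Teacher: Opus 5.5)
Your proof is correct and follows essentially the same route as the paper. You use perfect completeness over the full-support uniform oracle and the finite alphabets to force $A_{t,a}(z)B_{t,b}(z)=0$ for all $a\ne b$, regroup the four factors to get $C_{t,k}(z)C_{t,\ell}(z)=0$, and marginalize to $\Pr[T=t\mid z]=\sum_k C_{t,k}(z)$ to get existence, uniqueness, and conditional probability one. The only cosmetic difference is that you pass through $\Pr[\KA\ne\KB\mid z]=0$ for each $z$, whereas the paper sets the single nonnegative sum $2^{-N}\sum_z\sum_t\sum_{a\ne b}A_{t,a}(z)B_{t,b}(z)$ equal to zero directly.
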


\begin{proof}
Perfect completeness, the uniform distribution on the finite oracle space,
and \Cref{lem:rectangularity} give
\begin{equation*}
\begin{split}
0
  =\Pr[\KA\ne\KB]
  =2^{-N}\sum_{z\in\cube^N}
    \sum_t\sum_{a\ne b}A_{t,a}(z)B_{t,b}(z).
\end{split}
%\label{eq:pointwise}
\end{equation*}
Every summand is nonnegative, and all alphabets are finite.  Hence, 
for every \(z,t\) and \(a\ne b\),
\begin{equation}
  A_{t,a}(z)B_{t,b}(z)=0.
 \label{eq:cross-zero}
\end{equation}
For \(k\ne\ell\), it follows that
\[
\begin{aligned}
C_{t,k}(z)C_{t,\ell}(z)
  &=A_{t,k}(z)B_{t,k}(z)
    A_{t,\ell}(z)B_{t,\ell}(z)\\
  &=\bigl(A_{t,k}(z)B_{t,\ell}(z)\bigr)
    \bigl(A_{t,\ell}(z)B_{t,k}(z)\bigr)=0.
\end{aligned}
\]

Let \(A_t=\sum_kA_{t,k}\) and \(B_t=\sum_kB_{t,k}\).  Summing
\cref{eq:rectangle} over the two keys $a,b$ and using
\cref{eq:cross-zero}, we obtain
\begin{equation}
  \Pr[T=t\mid z]
  =A_t(z)B_t(z)
  =\sum_k C_{t,k}(z).
  \label{eq:transcript-sum}
\end{equation}
If the left-hand side is positive, at least one summand is positive.
Pairwise disjointness makes this key \(k\) unique.  Moreover,
\[
  C_{t,k}(z)
  =\Pr[T=t,\KA=k,\KB=k\mid z]
  =\Pr[T=t\mid z],
\]
where the last equality follows from \cref{eq:transcript-sum}.  Therefore
\[
  \Pr[\KA=\KB=k\mid T=t,z]=1. \qedhere
\]
\end{proof}

\subsection{Few Candidate Keys}

We next count how many key polynomials can be nonzero for a fixed transcript.

\begin{lemma}[Few candidate keys]\label{cor:few-keys}
For every transcript \(t\), the candidate key set
\[
  \cK_t:=\{k:C_{t,k}\not\equiv0\text{ on }\cube^N\}
\]
has size at most \(2^d\).
\end{lemma}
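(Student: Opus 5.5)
The plan is a counting argument that combines the support bound \Cref{lem:support} with the pairwise disjointness from \Cref{lem:disjoint-keys}. Fix the transcript \(t\). Every \(k\in\cK_t\) has \(C_{t,k}\not\equiv0\) on \(\cube^N\). Its unique multilinear representation has degree at most \(d\). To see this, first note \(\deg(C_{t,k})\le\deg(A_{t,k})+\deg(B_{t,k})\le 2q_{\A}+2q_{\B}\) by \Cref{lem:rectangularity}, since multilinearizing a product never raises degree. Second, any multilinear polynomial on \(N\) variables has degree at most \(N\), so the bound \(d=\min\{2(q_{\A}+q_{\B}),N\}\) holds. Applying \Cref{lem:support} then gives \(|\supp(C_{t,k})|\ge 2^{N-d}\) for each \(k\in\cK_t\).

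Next, \Cref{lem:disjoint-keys} says that for distinct keys \(k\ne\ell\) we have \(C_{t,k}(z)C_{t,\ell}(z)=0\) at every \(z\in\cube^N\). So the sets \(\supp(C_{t,k})\), for \(k\in\cK_t\), are pairwise disjoint subsets of \(\cube^N\). Summing gives
\[
  2^N\ \ge\ \Bigl|\bigcup_{k\in\cK_t}\supp(C_{t,k})\Bigr|
  \ =\ \sum_{k\in\cK_t}|\supp(C_{t,k})|\ \ge\ |\cK_t|\,2^{N-d},
\]
and hence \(|\cK_t|\le 2^d\).

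One technical point needs care. The key alphabet is finite for each fixed \(\lambda\), so \(\cK_t\) is finite and the sum is legitimate. The counting does not need this, however: the disjointness bound would already prevent more than \(2^d\) nonzero candidates. No step here is a real obstacle; the only thing to check is that the degree bound is applied to the multilinear representation of \(C_{t,k}\) as a function on the cube, which is exactly the setting of \Cref{lem:support}.
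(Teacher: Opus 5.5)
Your proposal is correct and matches the paper's proof: \Cref{lem:support} gives each nonzero $C_{t,k}$ at least $2^{N-d}$ support points, \Cref{lem:disjoint-keys} makes these supports pairwise disjoint, and counting inside the $2^N$-point cube yields $|\cK_t|\le 2^d$. You also justify that the multilinear degree of $C_{t,k}$ is at most $d=\min\{2(q_{\A}+q_{\B}),N\}$, which the paper only asserts.
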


\begin{proof}
By \Cref{lem:support}, each nonzero \(C_{t,k}\) has support of size at least
\(2^{N-d}\).  By \Cref{lem:disjoint-keys}, these supports are pairwise
disjoint.  Since the cube has \(2^N\) points,
\[
  |\cK_t|\le\frac{2^N}{2^{N-d}}=2^d. \qedhere
\]
\end{proof}

\subsection{A Decision Tree for Disjoint Polynomial Supports}

We now prove the central polynomial statement.  %The proof combines the sign-block, symmetrization, hitting-set, and decision-tree ingredients of Kothari et al.~\cite[Corollary~1, Lemmas~2--3, and Theorem~4]{KothariEtAl26} with the maximum-monomial restriction used by Li et al.~\cite[Lemmas~2.12 and~3.4]{LiEtAl26}. %Its depth depends only on the degree and not on the ambient dimension.

\begin{lemma}[Disjoint-support separator]\label{lem:separator}
Let \(p,q:\cube^N\to\R\) be multilinear polynomials of degree at most \(d\)
such that
\[
  p(z)q(z)=0
  \qquad\text{for every }z\in\cube^N.
\]
There is a deterministic decision tree of depth \(O(d^4)\) which, under the
promise \(p(z)+q(z)\ne0\), determines whether \(p(z)\ne0\) or \(q(z)\ne0\).
\end{lemma}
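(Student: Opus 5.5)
We will build the decision tree round by round, as in the Nisan--Szegedy and Midrijanis arguments for $D(f)=O(\deg(f)^c)$. The potential we track is the pair of degrees $\deg(p|_\rho)+\deg(q|_\rho)\le 2d$ under the current restriction $\rho$. Each round will query $O(d^3)$ variables. After a round, one of three things happens: one of $p|_\rho,q|_\rho$ has become identically zero, so the answer is the other one by the promise; or one of them is a nonzero constant, so the answer is that one; or the sum of the two degrees has dropped by at least one. This gives at most $2d$ rounds and depth $O(d^4)$. The disjointness $pq\equiv0$ on the cube is preserved under every restriction. So the invariant is maintained, and the whole proof reduces to designing a single round.

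\textbf{Structural step.} Suppose both $p':=p|_\rho$ and $q':=q|_\rho$ are nonconstant, of degrees $d_p,d_q\ge1$. Take a maximal family of pairwise disjoint maximum monomials of $p'$, and let $S_p$ be the union of their variable sets. Do the same for $q'$ to get $S_q$. Every maximum monomial of $p'$ meets $S_p$. Hence any complete assignment to $S_p$ kills every maximum monomial of $p'$ and lowers $\deg(p')$, provided $p'$ is not annihilated. So it suffices to show that each such family has only $O(d^2)$ monomials. Then $|S_p|+|S_q|=O(d^3)$, and querying all of $S_p\cup S_q$ forces a degree drop or an early answer.

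\textbf{Bounding the family.} Suppose $p'$ has $b$ pairwise disjoint maximum monomials $m_1,\dots,m_b$. The goal is a point $w$ and a block structure to which we can apply \Cref{lem:discrete-markov}. We use disjointness together with the fact that $q'$ is nonzero somewhere. First, pick a point where $q'\neq0$. Since $p'q'\equiv0$, we have $p'=0$ there. Then look inside each monomial $m_i$. Fixing all variables outside $m_i$ leaves a restriction in which $m_i$ still has a nonzero coefficient. Hence $p'$ is nonzero at some completion of $m_i$, namely a flip $x^{S_i}$ with $S_i\subseteq m_i$.

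Following \cite[Theorem~4]{KothariEtAl26}, we aim for a base point $x$ with $q'(x)\ne0$ (so $p'(x)=0$) together with disjoint blocks $S_1,\dots,S_b$ such that $p'(x^{S_i})\ne0$ for each $i$. At such points $q'(x^{S_i})=0$. Now define $r(y_1,\dots,y_b)=q'(x^{\bigcup_{i:y_i=1}S_i})/q'(x)$. Its degree is at most $d_q\le d$. It satisfies $r(0)=1$ and $r(e_i)=0$. The hypothesis $|r|\le1$ would need a normalization: we choose $x$ to maximize $|q'|$ over the relevant subcube, or we symmetrize and take a sign-block version. \Cref{lem:discrete-markov} then gives $b\le 2d^2$. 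The symmetric argument with the roles swapped bounds the family for $q'$.

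\textbf{Main obstacle.} The hard part is producing a single base point that works for all $b$ monomials simultaneously. We need $q'(x)\ne0$ with $|q'(x)|$ maximal, and flips confined to disjoint blocks inside the $m_i$ that each land in $\supp(p')$. The $m_i$ are disjoint, so the flips do not interfere. However, the outside assignment used to witness $p'\ne0$ inside $m_i$ depends on $i$. My first attempt is a hybrid: fix the variables outside $\bigcup m_i$ so that $q'$ stays nonzero. I would then argue inside the product subcube $\prod_i\{0,1\}^{m_i}$, using that $p'$ restricted there still contains the full monomial $\prod_i m_i$-structure. If the maximum-norm condition fails, I will fall back on the sign condition of \Cref{lem:discrete-markov}. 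That condition only needs $r(e_i)r(0)\le0$ rather than vanishing, so a maximizer of $|q'|$ over the subcube, combined with the sign flips, should suffice.
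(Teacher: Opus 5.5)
Your route is correct and differs from the paper's. The ``main obstacle'' you flag is not real, and the argument closes with what you already wrote.

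\textbf{Why the obstacle disappears.} Take $x$ to be a maximizer of $|q'|$ over the whole current subcube, i.e.\ over all unqueried variables. Then $q'(x)\neq0$, so $p'(x)=0$. For every $i$, fix the variables outside $m_i$ according to this same $x$.
\begin{itemize}
\item The resulting polynomial on $\{0,1\}^{m_i}$ keeps the coefficient of $m_i$, since no monomial of $p'$ strictly contains $m_i$. So it is nonzero.
\item It vanishes at $x|_{m_i}$, because $p'(x)=0$.
\item Hence some nonempty $S_i\subseteq m_i$ has $p'(x^{S_i})\neq0$, and therefore $q'(x^{S_i})=0$.
\end{itemize}
The outside assignment is always $x$, so it does not depend on $i$. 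The $S_i$ are disjoint because the $m_i$ are. Global maximality of $|q'(x)|$ gives $|r|\le1$ on all of $\{0,1\}^b$, together with $r(0^b)=1$ and $r(e_i)=0$. So \Cref{lem:discrete-markov} applies directly and gives $b\le 2d_q^2$. The hybrid over $\prod_i\{0,1\}^{m_i}$ and the sign-block fallback are unnecessary.

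\textbf{Comparison with the paper.} The paper picks one point $z^\star$ maximizing $|s|$, where $s=p^2-q^2$. It checks which of $p,q$ vanishes there (say $q$) and bounds only $q$'s disjoint family. The Markov hypothesis comes from the sign change of $s$: positive at $z^\star$, negative at $(z^\star)^{E_i}$. Each round therefore lowers only one degree. Since $\deg s\le 2D$, it obtains $b\le 8D^2$.

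You instead use two maximizers: $|q'|$ to bound $p'$'s family, and $|p'|$ to bound $q'$'s family. You exploit that the Markov function actually vanishes on the neighbours. This avoids $s$ and the sign analysis, works with degree at most $d$ rather than $2D$, and lowers both degrees each round.

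Both arguments use $O(d^3)$ queries per round and $O(d)$ rounds, hence depth $O(d^4)$. The paper's $s$-trick buys only the convenience of a single base point.
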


\begin{proof}
We construct the tree recursively. At one node, restrict \(p\) and \(q\) by
the answers already queried, and let $S\subseteq[N]$ be the set of unqueried
variables. Then we can view $p$ and $q$ as polynomials on the remaining subcube
$\cube^S$. Disjointness and the promise at $\cube^S$ are preserved. If
both restricted polynomials are identically zero, the subcube contains no
promised input and may be labeled arbitrarily. If exactly one is identically
zero, label the node with the other side.

Now suppose both restricted polynomials are not identically zero, and set
\[
  D=\max\{\deg(p),\deg(q)\}.
\]
The case \(D=0\) is impossible, as two nonzero constant functions cannot have
disjoint supports.  Let
\[
  s=p^2-q^2,
\]
so \(\deg(s)\le2D\), and \(s\) is not identically zero on $\cube^S$.
Choose \(z^\star \in\cube^S\) maximizing \(|s(z^\star)|\).  
As $p$ and $q$ has disjoint supports, exactly one of
\(p(z^\star)\) and \(q(z^\star)\) is nonzero.  W.l.o.g., assume that
\(p(z^\star)\ne0\).  Then
\[
  s(z^\star)>0
  \qquad\text{and}\qquad
  q(z^\star)=0.
\]

Let \(z_{M_1}, \ldots, z_{M_b}\) be a maximal family 
of pairwise-disjoint maximum monomials \footnote{Maximal means that for any other maximum monomial \(z_M\) of \(q\), we have $M\cap M_i\ne\emptyset$ for some \(i\in[b]\). Pairwise-disjoint means that \(M_i\cap M_j=\emptyset\) for all \(i\ne j\).}
 of \(q\), and $M_1, \ldots, M_b\subseteq S$ be their corresponding supports. As $q$ is not identically zero, it contains at least one maximum monomial and thus $b\geq 1$.

For each \(i\in[b]\), fix every variable outside
\(M_i\) according to \(z^\star\).  
Such restriction does not change the coefficient of the monomial \(z_{M_i}\),
because $z_{M_i}$ is maximum and no other monomial of \(q\) strictly containing \(M_i\).
Thus the resulting polynomial in the variables
of \(M_i\) is nonzero. Since \(q(z^\star)=0\), there is therefore a nonempty set
\(E_i\subseteq M_i\) such that
\[
  q\bigl((z^\star)^{E_i}\bigr)\ne0.
\]
Disjoint supports of \(p\) and \(q\) imply
\begin{equation} \label{eq:sign-change}
  s(z^\star)>0,
  \qquad
  s\bigl((z^\star)^{E_i}\bigr)<0.
\end{equation}
As $\{M_i\}$ are pairwise disjoint, the blocks \(\{E_i\}\) are also pairwise disjoint.

Now, consider the function $r:\{0,1\}^b\rightarrow \R$ defined by: 
%Then we define a new polynomial \(r\) by restricting \(s\) to simultaneous flips of these blocks and normalize. Namely,
for \(y\in\bits^b\),
\begin{equation*} 
  r(y)=
  \frac{
    s\!\left(
      (z^\star)^{\bigcup_{i:y_i=1}E_i}
    \right)
  }{|s(z^\star)|},
\end{equation*}

Because the blocks are disjoint, every original coordinate is
either fixed or is an affine function of one \(y_i\).  Hence
\(\deg(r)\le \deg(s) \leq 2D\).  The maximality of \(|s(z^\star)|\) and \Cref{eq:sign-change}
gives
\begin{itemize}
\item $|r(y)|\leq 1$ for any $y\in\{0,1\}^b$,
\item $|r(0^b)|=1$,
\item $r(y)\cdot r(0^b)\leq 0$ for any $y\in\{0,1\}^b$ with $|y|=1$.
\end{itemize}
Then, \Cref{lem:discrete-markov} yields $b=O(D^2)$.

Let \(S=M_1\cup\cdots\cup M_b\).  Every \(M_i\) has size
\(\deg(q)\le D\), so
\[
  |S|=O(D^3).
\]
As $\{M_i\}$ are 
maximal, \(S\) intersects the support of every
maximum monomial of \(q\).  Query all variables in \(S\).  Under every
assignment to these variables, each old maximum monomial loses at least
one free variable.  Therefore every branch restriction satisfies
\[
  q|_{\rho}\equiv0
  \qquad\text{or}\qquad
  \deg(q|_{\rho})<\deg(q).
\]
If \(q(z^\star)\ne0\), the symmetric construction instead strictly decreases
the degree of \(p\). 

A branch terminates if either $p$ or $q$ is identically zero after restriction.
Along every nonterminal branch, each stage decreases \(\deg(p)+\deg(q)\) by at
least one. As initially \(\deg(p)+\deg(q)\leq 2d\), and every stage queries
\(O(d^3)\) variables. The resulting decision tree has depth \(O(d^4)\).
\end{proof}

\subsection{The Eavesdropper}

\begin{proof}[Proof of \Cref{thm:main}]
Fix the observed transcript \(t\).  As the eavesdropper is computationally unbounded, it computes the candidate key set
\[
  \cK_t=\{k:C_{t,k}\not\equiv0\}.
\]
This set depends only on the public protocol and \(t\), so requires no queries to the sampled oracle. By \Cref{cor:few-keys}, \(|\cK_t|\le2^d\).  If \(t\) is actually
observed, \Cref{lem:disjoint-keys} guarantees that \(\cK_t\) is nonempty.

The eavesdropper performs balanced binary search on \(\cK_t\).  At a search
node, partition the remaining candidates into two sets \(\cL,\cR\) of as
equal size as possible, and form
\[
  P(z)=\sum_{k\in\cL}C_{t,k}(z),
  \qquad
  Q(z)=\sum_{k\in\cR}C_{t,k}(z).
\]
Both polynomials have degree at most \(d\).  Pairwise disjointness of the key
polynomials gives
\[
  P(z)Q(z)=0
  \qquad\text{for every }z\in\cube^N.
\]
For the actual oracle and every transcript of positive conditional
probability, \Cref{lem:disjoint-keys} guarantees \(P(z)+Q(z)>0\), and exactly one
candidate in \(\cL\cup\cR\) is positive on $z$.  The eavesdropper therefore applies the
decision tree from \Cref{lem:separator} to learn which half contains the candidate.

There are at most
\[
  \left\lceil\log_2|\cK_t|\right\rceil\le d
\]
search levels, and each level uses \(O(d^4)\) oracle-bit queries.  The total
number is \(O(d^5)\).  At the final leaf, the eavesdropper outputs the sole
remaining key.  By \Cref{lem:disjoint-keys}, this is exactly the key output by
both honest parties, for every \(z,t\) of positive conditional probability.
Therefore
\[
  \Pr[\KA=\KB=\KE]=1.
\]
Finally, \(d\le2(q_{\A}+q_{\B})\), so the total query complexity is
\(O((q_{\A}+q_{\B})^5)\).  Every query reads one oracle bit \(H(x)\), and is
therefore a classical query permitted in the quantum-query model.
\end{proof}

\bibliographystyle{alpha}
\bibliography{references}

\end{document}